\newcommand{\expval}[1]{\operatorname{E}[#1]}
\newcommand{\expvalbig}[1]{\operatorname{E}\left[#1\right]}
\newtheorem{lemma}{Lemma}
\begin{document}

\title{A Likelihood Ratio-Based Detector for QTMS Radar and Noise Radar}

\author{
	\IEEEauthorblockN{David Luong,~\IEEEmembership{Graduate Student Member, IEEE}, Bhashyam Balaji,~\IEEEmembership{Senior Member, IEEE}, and \\ Sreeraman Rajan,~\IEEEmembership{Senior Member, IEEE}}

\thanks{D.\ Luong is with Carleton University, Ottawa, ON, Canada K1S 5B6. Email: david.luong3@carleton.ca.}% <-this % stops a space
\thanks{B.\ Balaji is with Defence Research and Development Canada, Ottawa, ON, Canada K2K 2Y7. Email: bhashyam.balaji@drdc-rddc.gc.ca.}% <-this % stops a space
\thanks{S.\ Rajan is with Carleton University, Ottawa, ON, Canada K1S 5B6. Email: sreeraman.rajan@carleton.ca.}% <-this % stops a space
}

\maketitle

\begin{abstract}
	We derive a detector function for quantum two-mode squeezing (QTMS) radars and noise radars that is based on the use of a likelihood ratio (LR) test for distinguishing between the presence and absence of a target. In addition to an explicit expression for the LR detector, we derive a detector function which approximates the LR detector in the limit where the target is small, far away, or otherwise difficult to detect. When the number of integrated samples is large, we derive a theoretical expression for the receiver operating characteristic (ROC) curve of the radar when the LR detector is used. When the number of samples is small, we use simulations to understand the ROC curve behavior of the detector. One interesting finding is there exists a parameter regime in which a previously-studied detector outperforms the LR detector, contrary to the intuition that LR tests are optimal. This is because neither the Neyman-Pearson lemma, nor the Karlin-Rubin theorem which generalizes the lemma, hold in this particular problem. However, the LR detector remains a good choice for target detection.
\end{abstract}

\begin{IEEEkeywords}
	Quantum radar, quantum two-mode squeezing radar, noise radar, likelihood ratio, target detection
\end{IEEEkeywords}

\section{Introduction}

From an abstract, mathematical perspective, radars are machines for hypothesis testing: they decide whether a target is present or absent. The physical operation of the radar dictates the exact nature of the test: what distributions the radar detection data are drawn from, which statistics are being used, and---most importantly---how powerful the test is. In fact, the gold standard for analyzing the detection performance of any radar is the \emph{receiver operating characteristic} (ROC) curve, which gives the probability of detection as a function of the probability of false alarm. That is to say, the ROC curve is the power of the hypothesis test performed by the radar as a function of the significance level. Therefore, when trying to understand how well a given radar works, a careful analysis of the hypothesis testing performed by the radar is fundamental.

In this paper, we are concerned with a class of radars known as \emph{noise radars} \cite{thayaparan2006noise,kulpa2013signal,narayanan2016noise,wasserzier2019noise,cooper1967random}. This type of radar generates a pair of correlated electromagnetic noise signals, of which one is sent at a target and the other is retained as a reference signal within the radar. There is a very important subclass of radars which come under the umbrella of noise radar, namely \emph{quantum two-mode squeezing} (QTMS) radar. As the name implies, QTMS radar is a type of quantum radar, a variant of \emph{quantum illumination} \cite{lloyd2008qi,tan2008quantum,lopaeva2013qi,balaji2018qi}. QTMS radar was the first type of microwave quantum radar for which a laboratory prototype has been built and the results published in scientific publications \cite{chang2018quantum,luong2019quantum,luong2019roc,luong2020magazine}. From the point of view of target detection, it has been shown that noise radars and QTMS radars effectively lie on a continuum characterized by the correlation coefficient between the signal received by a radar and the reference signal stored within that radar \cite{luong2019cov,luong2019rice}. Therefore, the results in this paper will apply to both QTMS radars and noise radars.

In the case of noise radars and QTMS radars, it is typically assumed that the radar detection data is drawn from a multivariate normal distribution with zero mean \cite{dawood2001roc,luong2019cov}. The question then becomes: what test statistic---or \emph{detector function}---should the radar use when conducting the hypothesis test? Various detector functions have been proposed and analyzed in the past \cite{dawood2001roc,luong2019rice,luong2020simdet}, but it appears that one particularly natural test statistic has hitherto escaped notice: the likelihood ratio.

We now present an analysis of the detection performance of QTMS radars and noise radars when the likelihood ratio (LR) is used as a detector function. We explicitly derive an expression for the LR detector function under certain simplifying assumptions. We then derive an approximate LR detector function which is appropriate when the radar attempts to detect a target which is small or far away. Theoretical expressions for the ROC curve are presented for the case for the limit of long radar integration times (large number of samples), and simulation results are shown for short integration times (small number of samples). One major result of our work is that, because the Karlin-Rubin theorem (a generalization of the Neyman-Pearson lemma) does not hold, the LR detector is not necessarily optimal and there exists a parameter regime in which it is known to be outperformed by a previously-studied detector function. Hence, the search for detector functions for QTMS radar and noise radar remains an open problem, though the LR detector is a strong competitor.

\section{Background}
\label{sec:background}

\begin{figure}[t]
	\centerline{\includegraphics[width=.8\columnwidth]{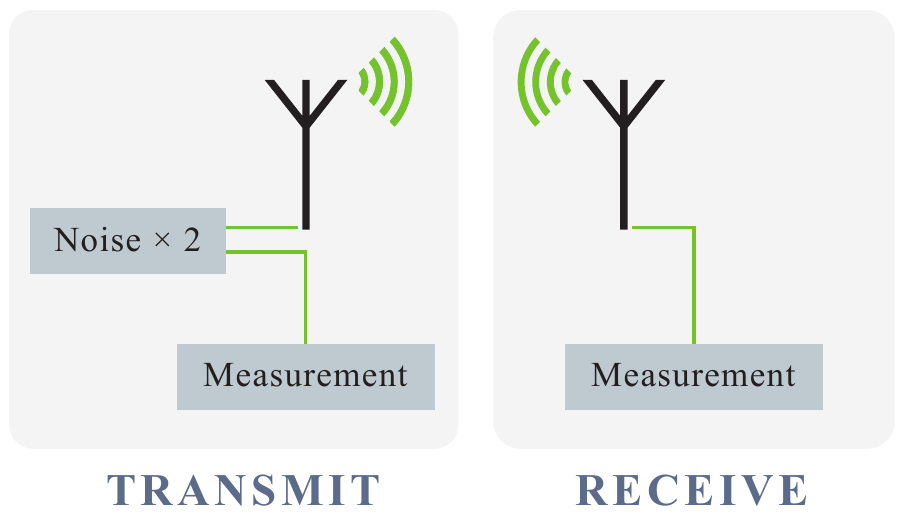}}
	\caption{Block diagram illustrating the basic idea of noise radar. This figure first appeared in \cite{luong2019roc}.}
	\label{fig:block_diagram_abstract}
\end{figure}

Noise radars and QTMS radars work by transmitting an electromagnetic noise signal toward a target, receiving the echo, and comparing the received signal with a reference signal stored within the radar. This is illustrated in Fig.\ \ref{fig:block_diagram_abstract}. Every electromagnetic signal can be mathematically described by two real-valued time series, one representing the in-phase voltages of the signal and the other representing its quadrature voltages. (They can be combined into a single complex-valued time series, but we will not do so in this paper.) In a noise radar, the two voltage time series of the transmitted signal are typically taken to be Gaussian white noise processes with zero mean, and all other noise sources inside and outside the radar are modeled as additive white Gaussian noise (AWGN). It follows that the voltages time series of the received signal are also Gaussian white noise. To our knowledge, all previous work on the theory of noise radar has made use of these assumptions. They are reasonable assumptions in practice, since it is not unreasonable to assume that the properties of the signal generated by the radar are known, nor is it unreasonable to assume that external noise sources are AWGN unless this is specifically known to be false.

Let us denote the in-phase and quadrature voltages of the received signal by $I_1[n]$ and $Q_1[n]$, respectively, where $n$ is the discrete time index. Similarly, let us denote the in-phase and quadrature voltages of the reference signal by $I_2[n]$ and $Q_2[n]$. As a consequence of the discussion above, each of these variables is a Gaussian white noise process with zero mean. They are ``jointly white'' in the sense that $I_1[n_0]$ and $I_2[n_1]$ are independent unless $n_0 = n_1$, and likewise for all other pairs of the four signals. Because this is so, we are only interested in the case where the time difference is zero, and we will suppress the variable $n$ when no confusion arises.

It is evident that the four voltage signals are fully characterized by the $4 \times 4$ covariance matrix $\expval{xx^T}$, where $x = [I_1, Q_1, I_2, Q_2]^T$. In \cite{luong2019cov}, we showed that for a noise radar, $\expval{xx^T}$ can be written in block matrix form as
\begin{equation} \label{eq:QTMS_cov}
	\mathbf{\Sigma} =
	\begin{bmatrix}
		\sigma_1^2 \mathbf{1}_2 & \rho \sigma_1 \sigma_2 \mathbf{R}(\phi) \\
		\rho \sigma_1 \sigma_2 \mathbf{R}(\phi) & \sigma_2^2 \mathbf{1}_2
	\end{bmatrix}
\end{equation}
where $\sigma_1^2$ and $\sigma_2^2$ are the received and reference signal powers, respectively, while $\phi$ is the phase shift between the signals, $\mathbf{1}_2$ is the $2 \times 2$ identity matrix, and $\mathbf{R}(\phi)$ is the rotation matrix 
\begin{equation}
	\mathbf{R}(\phi) = 
	\begin{bmatrix}
		\cos \phi & \sin \phi \\
		-\sin \phi & \cos \phi
	\end{bmatrix} \! .
\end{equation}
QTMS radars are characterized by a similar matrix, but with a reflection matrix instead of a rotation matrix:
\begin{equation}
	\mathbf{R}'(\phi) = 
	\begin{bmatrix}
		\cos \phi & \sin \phi \\
		\sin \phi & -\cos \phi
	\end{bmatrix} \! .
\end{equation}
For the purposes of this paper, it is unimportant whether $\mathbf{R}(\phi)$ or $\mathbf{R}'(\phi)$ is used. The results will be the same in any case.

\subsection{The QTMS/Noise Radar Detection Problem}

Out of the four parameters appearing in the covariance matrix \eqref{eq:QTMS_cov}, it is the correlation coefficient $\rho$ which is of greatest importance in the problem of target detection. When the target is absent, $\rho = 0$ because the received signal is purely background noise which is completely independent of the radar's internal reference signal. Conversely, when the target is present, $\rho > 0$ because some component of the received signal will have originated from the radar, and that component will be correlated with the reference signal. The detection problem for noise radar can therefore be formulated as the problem of distinguishing between the following hypotheses:
\begin{equation} \label{eq:hypotheses}
\begin{alignedat}{3}
	H_0&: \rho = 0 &&\quad\text{Target absent} \\
	H_1&: \rho > 0 &&\quad\text{Target present}
\end{alignedat}
\end{equation}
At this point, we note that the ``quantum advantage'' of QTMS radars over standard noise radars lies in the fact that QTMS radars can achieve higher values of $\rho$. This means that the two hypotheses are easier to distinguish, giving rise to an enhancement in detection performance.

In order to perform this hypothesis test, we must decide on a detector function (test statistic) which allows us to distinguish between the two cases. If we were working with only two time series, a natural choice would be to perform matched filtering. However, we are working with four time series, so we must find a detector which suitably generalizes the matched filter. Previous work has focused on a generalization of matched filtering to complex signals or on estimating $\rho$ directly. In this paper, we use a statistic based on the likelihood ratio in order to derive a detector function.

There are, unfortunately, three other nuisance parameters to be accounted for: $\sigma_1$, $\sigma_2$, and $\phi$. None of these play a direct role in distinguishing whether or not there is a target. However, they considerably complicate the analysis. Therefore, in the derivation of the LR detector, we make the simplifying assumptions $\sigma_1 = \sigma_2 = 1$ and $\phi = 0$. This amounts to a normalization and standardization of the radar's detection data. The time series are normalized to have variance 1, while the reference signal in the radar is ``rotated'' so that there is no phase shift between it and the received signal. We will show via simulations that some of our results hold even when the assumption $\sigma_1 = \sigma_2 = 1$ is violated. As for $\phi = 0$, this assumption was made in previous theoretical and experimental work, so the results given below are in line with previous research. Nevertheless, in future work, we will generalize the LR detector to the case where $\sigma_1$, $\sigma_2$, and $\phi$ are unknown.

With the above discussion in mind, we may formulate the QTMS/noise radar detection problem as follows: let $N$ independent samples be drawn from a multivariate normal distribution with zero mean and covariance matrix
\begin{equation} \label{eq:cov_simplified}
	\mathbf{\Sigma}(\rho) =
	\begin{bmatrix}
		1 & 0 & \rho & 0 \\
		0 & 1 & 0 & \pm\rho \\
		\rho & 0 & 1 & 0 \\
		0 & \pm\rho & 0 & 1 
	\end{bmatrix}
\end{equation}
where the positive sign should be used for noise radars and the negative sign for QTMS radars. (Note that $N$ is related to the integration time $T$ and the sampling frequency $f_s$ of the radar by $T = N/f_s$.) Given these samples, we must decide whether $\rho = 0$ or $\rho > 0$. Our approach will be to use the likelihood ratio to distinguish between these two hypotheses. 

In the calculations that follow, we will assume the use of a QTMS radar and use the negative sign in \eqref{eq:cov_simplified}. To apply our results to standard noise radars, only one sign change is necessary, as we will indicate below.

\subsection{Simulation Procedure}
\label{subsec:simulation}

In this paper, we will have recourse to computer simulations to supplement our theoretical calculations. Naively, we would simply generate $N$ random vectors every time we wish to simulate a single value of the detector function. However, we will see that the likelihood ratio depends on the data only through the sample covariance matrix
\begin{equation} \label{eq:sample_cov}
	\bar{\mathbf{S}} = \frac{1}{N} \sum_{n=1}^N x_n x_n^T
\end{equation}
where $x_n$ is the $n$th sample vector. It is unnecessary, therefore, to generate $N$ random vectors. We need only draw a single random matrix from the Wishart distribution $W_4(\mathbf{\Sigma}, N)$, then normalize the result by $N$. This was the procedure used in \cite{luong2019rice}; it leads to significantly shorter computation times.

In all cases where we obtain ROC curves by simulation, we generate $10^7$ random matrices with $\rho = 0$ and another $10^7$ with a given value of $\rho$ greater than zero. These correspond to the cases where the radar target is absent or present, respectively. From these matrices, we calculate $10^7$ simulated detector function outputs for each case. The ROC curves are then obtained by using the histograms of detector function outputs as empirical probability density functions.

\section{The Likelihood Ratio Detector}

As mentioned previously, one way to test between the hypotheses in \eqref{eq:hypotheses} is to perform a likelihood ratio test. We now derive an explicit expression for the LR detector function.

For $N$ independently drawn samples, the log-likelihood is given by the expression
\begin{equation} \label{eq:log_like}
\begin{split}
	\ell(\rho) = -\frac{N}{2} \left[ \frac{\bar{P}_\text{tot} - \bar{D}_1 \rho}{1 - \rho^2} + 2 \ln (1 - \rho ^2) + 2 \ln(2 \pi) \right]
\end{split}
\end{equation}
where, for brevity, we define
\begin{gather}
	\label{eq:P_tot}
	P_\text{tot} \equiv I_1^2 + Q_1^2 + I_2^2 + Q_2^2 \\
	\label{eq:det1}
	D_1 \equiv I_1 I_2 - Q_1 Q_2.
\end{gather}
A line over an expression indicates the sample mean. For example,
\begin{equation}
	\overline{I_1 I_2} = \frac{1}{N} \sum_{n=1}^N i_1^{(n)} i_2^{(n)}
\end{equation}
where $i_1^{(n)}$ and $i_2^{(n)}$ denote the $n$th samples of $I_1$ and $I_2$, respectively.

As mentioned previously, we have taken the negative sign in \eqref{eq:cov_simplified}. The only change needed when the positive sign is used is to set $D_1 \equiv I_1 I_2 + Q_1 Q_2$.

We can interpret $P_\text{tot}$ as the total power at the radar receiver; it is the sum of the powers of the in-phase and quadrature components of both the received signal and the reference signal. The quantity $D_1$ appeared in \cite{luong2019roc} under the name ``Detector 1'' and was studied in further depth in a previous publication \cite{luong2020simdet}.

The LR detector is defined as
\begin{align} \label{eq:det_LR}
	\hat{D}_\text{LR} &= -2[\ell(0) - \ell(\hat{\rho})] \nonumber \\
		&= N \! \left[ \frac{2\bar{D}_1\hat{\rho} - \bar{P}_\text{tot}\hat{\rho}^2}{1 - \hat{\rho}^2} - 2\ln(1 - \hat{\rho}^2) \right]
\end{align}
where $\hat{\rho}$ is the maximum likelihood estimate of $\rho$. The factor of $-2$ ensures compatibility with Wilks' theorem, which we will invoke in the following section.

In order to obtain $\hat{\rho}$, we must maximize the log-likelihood \eqref{eq:log_like}. To this end, we calculate the derivative of $\ell(\rho)$:
\begin{equation}
	\frac{d\ell(\rho)}{d\rho} = N\frac{\bar{D}_1 - (\bar{P}_\text{tot} - 2)\rho + \bar{D}_1\rho^2 - 2\rho^3}{(1 - \rho^2)^2}.
\end{equation}
We set this equal to 0 in order to find the maximum of $\ell(\rho)$. Assuming that $0 < \hat{\rho} < 1$, the maximum likelihood estimate will be a solution of the cubic equation
\begin{equation} \label{eq:cubic}
	\bar{D}_1 - (\bar{P}_\text{tot} - 2)\rho + \bar{D}_1\rho^2 - 2\rho^3 = 0.
\end{equation}
The general form for the solution of a cubic equation is known, so we may immediately write down the solution. We define the auxiliary quantities
\begin{subequations}
\begin{align}
	A_1 &= 6(\bar{P}_\text{tot} - 2) - \bar{D}_1^2 \\
	A_2 &= (72 - 9\bar{P}_\text{tot} + \bar{D}_1^2) \bar{D}_1 \\
	w &= \sqrt[3]{A_2 + \sqrt{A_1^3 + A_2^2}}, \label{eq:cube_sq_root}
\end{align}
\end{subequations}
in terms of which the maximum likelihood estimate of $\rho$ can be expressed as
\begin{equation} \label{eq:ML_rho}
	\hat{\rho} = \frac{1}{6} \left( w - \frac{A_1}{w} + \bar{D}_1 \right).
\end{equation}
In \eqref{eq:cube_sq_root}, it is necessary to choose the cube and square roots appropriately to ensure $0 < \hat{\rho} < 1$. 

Having obtained this estimate, the LR detector is obtained by simply substituting this value of $\hat{\rho}$ into \eqref{eq:det_LR}.

\subsection{Approximate LR detector}

Because the exact form of the LR detector is unwieldy, we shall derive an approximate form of the detector function. In order to do so, let us assume that $\rho \ll 1$. This regime is of great importance for radar detection because it corresponds to targets which are difficult to detect. For example, they may be far away from the radar, or they could have very small radar cross sections, or the signal-to-noise ratio may be unfavorable. All these cases would lead to a small correlation coefficient \cite{luong2020performance}. Therefore, the assumption $\rho \ll 1$ is not merely a mathematical convenience, but is a very practical one. 

Under the small-$\rho$ assumption, we may expand \eqref{eq:det_LR} in powers of $\rho$, keeping terms up to second order:
\begin{equation} \label{eq:det_LR_approx_maxL}
	\hat{D}_\text{LR} \approx N\hat{\rho} [2 \bar{D}_1 - (\bar{P}_\text{tot} - 2) \hat{\rho}]
\end{equation}
Similarly, we keep terms up to first order in \eqref{eq:cubic} (which is equivalent to keeping terms up to second order in the log-likelihood itself). The resulting equation is linear, and we readily obtain the following approximation of the maximum likelihood estimate $\hat{\rho}$:
\begin{equation}
	\hat{\rho} \approx \frac{\bar{D}_1}{\bar{P}_\text{tot} - 2}.
\end{equation}
By substituting this into \eqref{eq:det_LR_approx_maxL}, we obtain an approximation of the LR detector that is correct up to second order in $\rho$:
\begin{equation} \label{eq:det_LR_approx}
	\hat{D}_\text{LR} \approx \frac{N \bar{D}_1^2}{\bar{P}_\text{tot} - 2}.
\end{equation}
This approximation is clearly easier to calculate than the exact LR detector, an important consideration when processing power is limited.

It is of interest to note that the approximate detector function \eqref{eq:det_LR_approx} can be obtained directly from the exact expressions \eqref{eq:det_LR} and \eqref{eq:ML_rho} by expanding in powers of $\bar{D}_1$ and retaining the lowest-order term. (This can be verified using a computer algebra system such as Mathematica.) Note that $\expval{D_1} = 2\rho$, so when $\rho$ is small, $\bar{D}_1$ may be expected to be small as well.

\section{ROC Curve for the LR Detector}

It follows from Wilks' theorem that, under $H_0$ (target absent), the distribution of the detector in the limit $N \to \infty$ is $D_\text{LR} \sim \chi_1^2$, a chi-square distribution with one degree of freedom \cite{wilks1938large}. The fact that there is one degree of freedom follows from the fact that $\rho$ is a one-dimensional quantity and that the parameter space under $H_0$ is zero-dimensional, being the single point $\rho = 0$.

An approximation to the distribution of $D_\text{LR}$ under $H_1$ (target present) can be obtained by appealing to Theorem 1 of \cite{davidson1970limiting}. That theorem is extremely general and we are only interested in the specific case where the likelihood function contains only one parameter to test and no nuisance parameters. In the following lemma, we restate the theorem for the specific case we are interested in.

\begin{lemma}
	Let $N$ samples be drawn from a family of probability distributions parameterized by $\theta$, let $\delta$ be a constant, and let $\lambda_N$ be the likelihood ratio for testing the following hypotheses:
	\begin{equation}
	\begin{aligned}
		H_0: \theta &= \theta_0 \\
		H_1: \theta &= \theta_0 + \frac{\delta}{\sqrt{N}}.
	\end{aligned}
	\end{equation}
	Finally, let $\mathcal{I}(\theta)$ be the Fisher information for the family of distributions from which the samples were drawn. Then, under the alternative hypothesis $H_1$, it is true that
	\begin{equation}
		-2 \ln \lambda_N \overset{d}{\to} \chi^2_1[\delta^2 \mathcal{I}(\theta_0)].
	\end{equation}
	In other words, as $N \to \infty$, the statistic $-2 \ln \lambda_N$ converges in distribution to the noncentral chi-square distribution with one degree of freedom and noncentrality parameter $\delta^2 \mathcal{I}(\theta_0)$.
\end{lemma}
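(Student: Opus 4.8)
The plan is to read $-2\ln\lambda_N$ as the generalized likelihood ratio statistic $2[\ell_N(\hat\theta) - \ell_N(\theta_0)]$, where $\ell_N$ is the log-likelihood of the $N$ samples and $\hat\theta$ is the unrestricted maximum likelihood estimate (matching the detector $\hat{D}_\text{LR}$ of \eqref{eq:det_LR}), and to compute its limiting law under the sequence of data-generating parameters $\theta_N^\ast = \theta_0 + \delta/\sqrt N$. Since the lemma is exactly the one-parameter, no-nuisance-parameter specialization of Theorem~1 of \cite{davidson1970limiting}, one route is simply to substitute the scalar case into their general formula and verify that the noncentrality matrix collapses to $\delta^2\mathcal{I}(\theta_0)$. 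I would prefer, however, to give a self-contained argument via a local quadratic (local asymptotic normality) expansion, which keeps the single-parameter case transparent.

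First I would reduce the statistic to a quadratic form in the normalized score. Writing $\Delta_N = \ell_N'(\theta_0)/\sqrt N$ and using consistency of $\hat\theta$ together with the law of large numbers for the observed information, $-\ell_N''(\theta_0)/N \to \mathcal{I}(\theta_0)$, a second-order Taylor expansion of $\ell_N$ about $\theta_0$ gives $\ell_N(\theta_0 + h/\sqrt N) - \ell_N(\theta_0) = h\Delta_N - \tfrac12 h^2\mathcal{I}(\theta_0) + o_P(1)$. Maximizing the right-hand side over $h$ yields $\sqrt N(\hat\theta - \theta_0) = \Delta_N/\mathcal{I}(\theta_0) + o_P(1)$ and hence $-2\ln\lambda_N = \Delta_N^2/\mathcal{I}(\theta_0) + o_P(1)$. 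The problem is thereby reduced to identifying the limiting distribution of $\Delta_N$.

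Second I would determine that limiting distribution \emph{under the moving alternative} $\theta_N^\ast$. Under $H_0$ the score is a sum of i.i.d.\ mean-zero terms, so $\Delta_N \to N(0,\mathcal{I}(\theta_0))$ by the central limit theorem; the subtlety is that under $H_1$ the samples are drawn from $\theta_N^\ast$, a distribution that drifts with $N$. I would handle this by expanding the score once more, $\ell_N'(\theta_0) = \ell_N'(\theta_N^\ast) + \ell_N''(\theta_N^\ast)(\theta_0 - \theta_N^\ast) + o_P(\sqrt N)$: the first term, being the score at the true parameter, is asymptotically $N(0,\mathcal{I}(\theta_0))$, while the second contributes the deterministic shift $(-\ell_N''(\theta_N^\ast)/N)\,\delta \to \delta\mathcal{I}(\theta_0)$. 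Equivalently, one may invoke contiguity of $P_{\theta_N^\ast}$ to $P_{\theta_0}$ and Le~Cam's third lemma, which shifts the mean by the asymptotic covariance of $\Delta_N$ with the log-likelihood ratio, again $\delta\mathcal{I}(\theta_0)$. Either way, $\Delta_N \to N(\delta\mathcal{I}(\theta_0), \mathcal{I}(\theta_0))$ under $H_1$.

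Finally I would combine the two steps: $\Delta_N/\sqrt{\mathcal{I}(\theta_0)}$ converges to $N(\delta\sqrt{\mathcal{I}(\theta_0)}, 1)$, so its square converges to a noncentral chi-square with one degree of freedom and noncentrality parameter $(\delta\sqrt{\mathcal{I}(\theta_0)})^2 = \delta^2\mathcal{I}(\theta_0)$, which is precisely the claim. I expect the main obstacle to be the second step, the shift in the mean of $\Delta_N$, because the $1/\sqrt N$ local alternative makes the sampling distribution change with $N$, so the ordinary central limit theorem does not apply directly and the drift must be controlled either by the triangular-array score expansion above or by a contiguity argument. As a consistency check, setting $\delta = 0$ recovers the central $\chi^2_1$ of Wilks' theorem already quoted for the $H_0$ case, and the single degree of freedom matches the fact that the full parameter space is one-dimensional while the null is the single point $\theta_0$.
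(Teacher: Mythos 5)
Your proof is correct, but it takes a genuinely different route from the paper: the paper disposes of this lemma in a single line, citing it as the scalar, no-nuisance-parameter special case of Theorem 1 of \cite{davidson1970limiting}, whereas you rederive the result from first principles. Your reading of $\lambda_N$ as the generalized likelihood ratio $L_N(\theta_0)/L_N(\hat\theta)$ evaluated on data drawn from the drifting parameter $\theta_0 + \delta/\sqrt{N}$ is the right one---it matches how the detector $\hat{D}_\text{LR}$ in \eqref{eq:det_LR} is constructed---and your two-step structure (reduce $-2\ln\lambda_N$ to $\Delta_N^2/\mathcal{I}(\theta_0) + o_P(1)$ via the local quadratic expansion, then shift the mean of the score by $\delta\,\mathcal{I}(\theta_0)$ under the moving alternative using either the triangular-array expansion or contiguity and Le~Cam's third lemma) is the standard, sound way to obtain the noncentral $\chi^2_1[\delta^2\mathcal{I}(\theta_0)]$ limit; the $\delta = 0$ consistency check against Wilks' theorem is a nice touch. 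What your route buys is transparency and self-containment: the reader sees exactly where the noncentrality parameter comes from, namely the squared asymptotic mean of the normalized score. What the paper's citation buys is brevity and inherited rigor: Davidson and Lever's theorem comes with explicit Cram\'er-type regularity conditions (smoothness and domination of the likelihood, continuous positive Fisher information, consistency of the MLE along the contiguous sequence), all of which your sketch assumes tacitly---your $o_P$ remainders, the interchange of maximization with the quadratic approximation, and the consistency of $\hat\theta$ under $\theta_N^\ast$ each need justification that the cited theorem packages away. To stand as a complete proof rather than a sketch, your argument would need those conditions stated and verified for the Gaussian family with covariance \eqref{eq:cov_simplified}.
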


\begin{proof}
	This is a special case of Theorem 1 of \cite{davidson1970limiting}.
\end{proof}
Note that by taking $\delta = 0$, we immediately recover Wilks' theorem as a special case.

In the case of the LR detector, we take $\theta = \rho$, $\theta_0 = 0$, and $\delta = \rho \sqrt{N}$. It is important to note that, strictly speaking, this proceeding is not legitimate because $\delta$ should not depend on $N$. However, we will show via simulations that for small $\rho$, we still obtain a satisfactory result. To proceed, then, we calculate the Fisher information:
\begin{align}
	\mathcal{I}(\rho) &= -\expvalbig{\frac{\partial^2}{\partial\rho^2} \ell(\rho)} \nonumber \\
		&= \expvalbig{ \frac{-2 + P_\text{tot}(1 + 3\rho^2) - 2 D_1 \rho (3 + \rho^2) + 2\rho^4}{(1 - \rho^2)^3} } \nonumber \\
		&= \frac{1}{(1 - \rho)^2} + \frac{1}{(1 + \rho)^2}.
\end{align}
The final line was obtained by using the linearity of the expectation value operator, inspecting the definitions of $P_\text{tot}$ and $D_1$ in \eqref{eq:P_tot} and \eqref{eq:det1}, respectively, then reading off the appropriate entries of \eqref{eq:QTMS_cov}. It follows from the lemma that, in the limit $N \to \infty$,
\begin{equation}
	D_\text{LR} \sim \chi_1^2 (2 N \rho^2).
\end{equation}

With this result in hand, we can calculate the receiver operating characteristic (ROC) curve for the LR detector. It is known that the cumulative density function for the $\chi_1^2$ distribution is the regularized gamma function $P(1/2, x/2)$. Therefore, the probability of false alarm for a given threshold $T$ is given by the corresponding survival function:
\begin{equation}
	p_\text{FA} = S(T) \equiv 1 - P \left(\frac{1}{2} , \frac{T}{2}\right).
\end{equation}
The survival function for the $\chi_1^2 (2 N \rho^2)$ distribution is the Marcum Q-function $Q_{1/2}(\rho \sqrt{2N}, \sqrt{x})$. It follows immediately that the ROC curve is given by
\begin{equation} \label{eq:ROC_LR}
	p_\text{D}(p_\text{FA}) = Q_{\frac{1}{2}} \! \left[ \rho \sqrt{2N}, \textstyle\sqrt{S^{-1}(p_\text{FA})} \right].
\end{equation}

Because the theorem that we used to derive this ROC curve expression does not strictly hold in our case, it is necessary to verify that this expression does produce reasonable results. Moreover, the approximate equation for the LR detector given in \eqref{eq:det_LR_approx} was based on truncating a series expansion. We verified the validity of both the ROC curve expression \eqref{eq:ROC_LR} and the small-$\rho$ approximation \eqref{eq:det_LR_approx} by running simulations as described in Sec.\ \ref{subsec:simulation}. In Fig.\ \ref{fig:histogram}, we can see that the resulting histograms match up with the theoretical distributions obtained above. In Fig.\ \ref{fig:ROC_LR_1_1}, we plotted the ROC curves that arise from the simulations. The results show that the simulated data agrees extremely closely with the theoretical expression \eqref{eq:ROC_LR}, at least for small values of $\rho$.

\begin{figure}[t]
	\centerline{\includegraphics[width=\columnwidth]{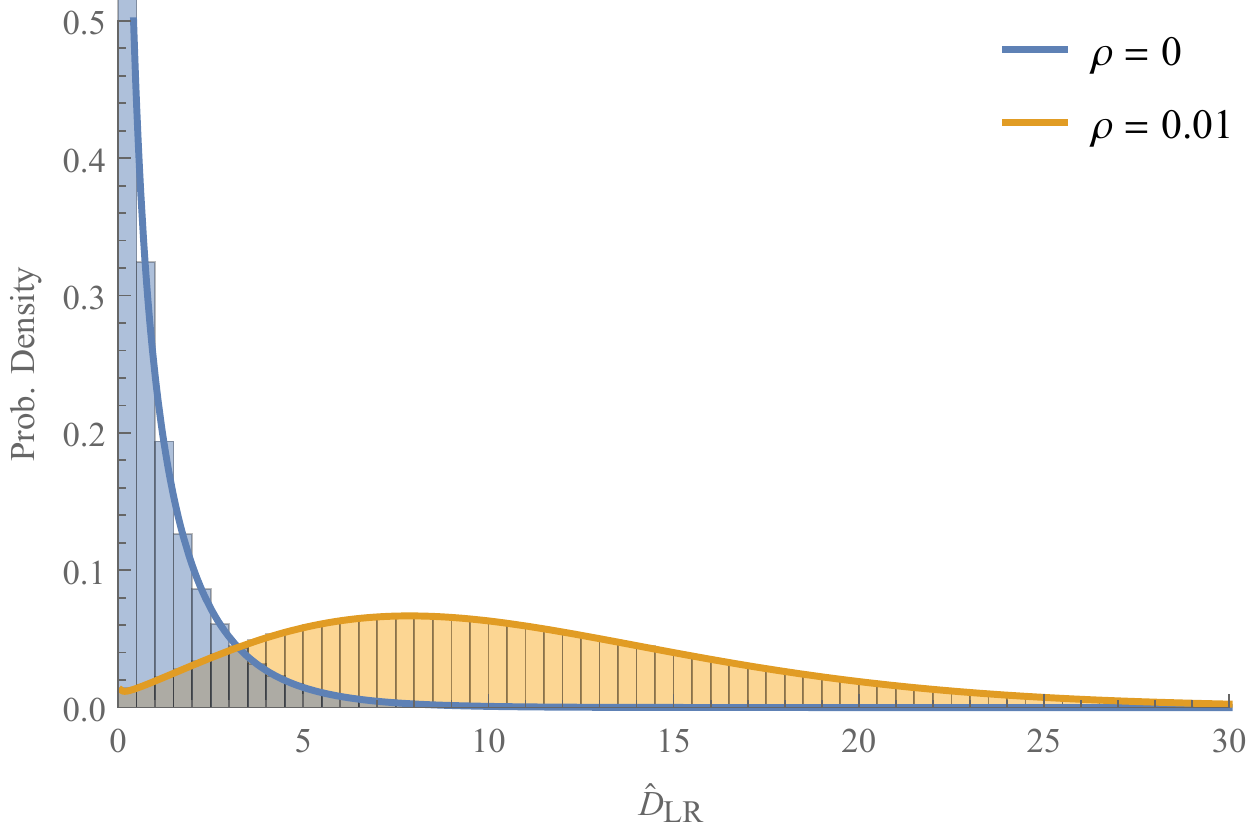}}
	\caption{Histograms of simulated LR detector outputs for $N = 50\,000$, plotted together with theoretical probability density functions. Blue: $\rho = 0$. Orange: $\rho = 0.01$.}
	\label{fig:histogram}
\end{figure}

\begin{figure}[t]
	\centerline{\includegraphics[width=\columnwidth]{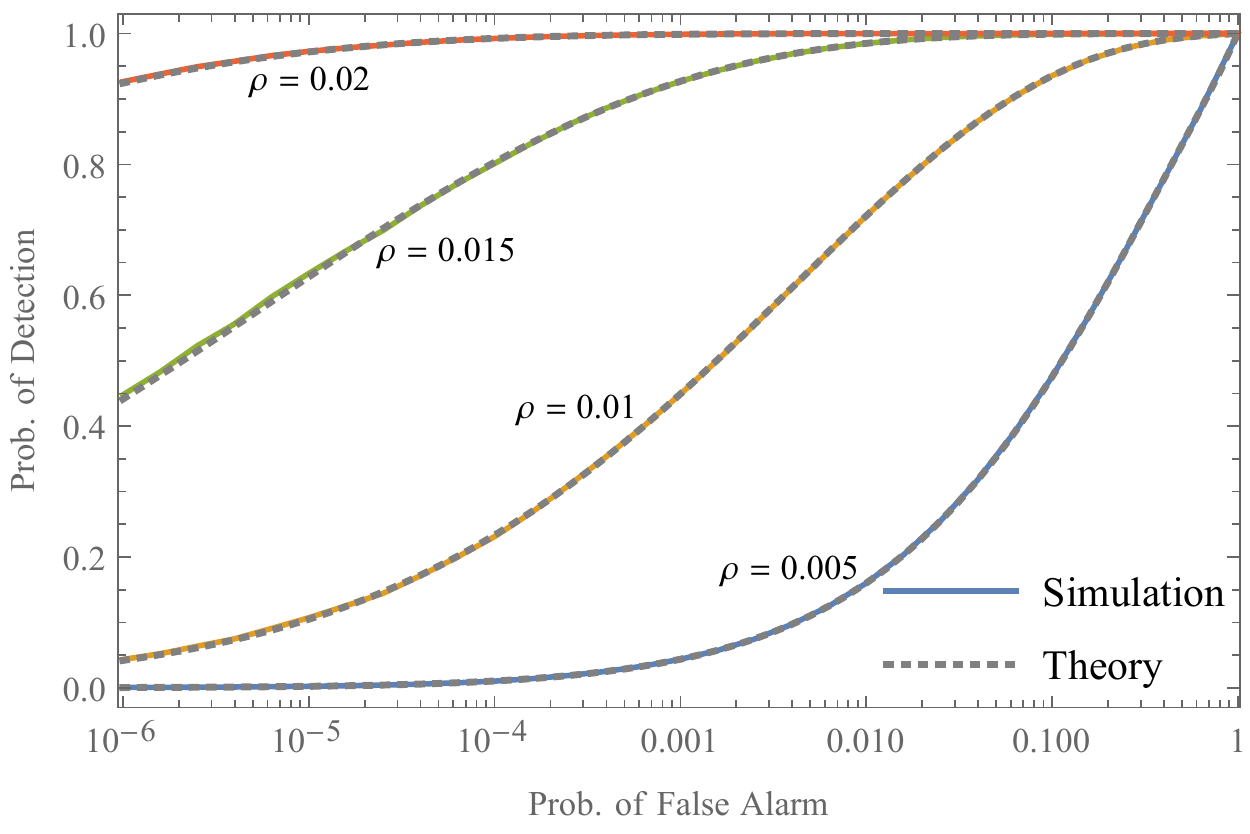}}
	\caption{Comparison of simulated ROC curves for the LR detector with theoretical ROC curves calculated from \eqref{eq:ROC_LR} for $N = 50\,000$ and varying values of $\rho$.}
	\label{fig:ROC_LR_1_1}
\end{figure}

\begin{figure}[t]
	\centering
	\subfloat[]{\includegraphics[width=\columnwidth]{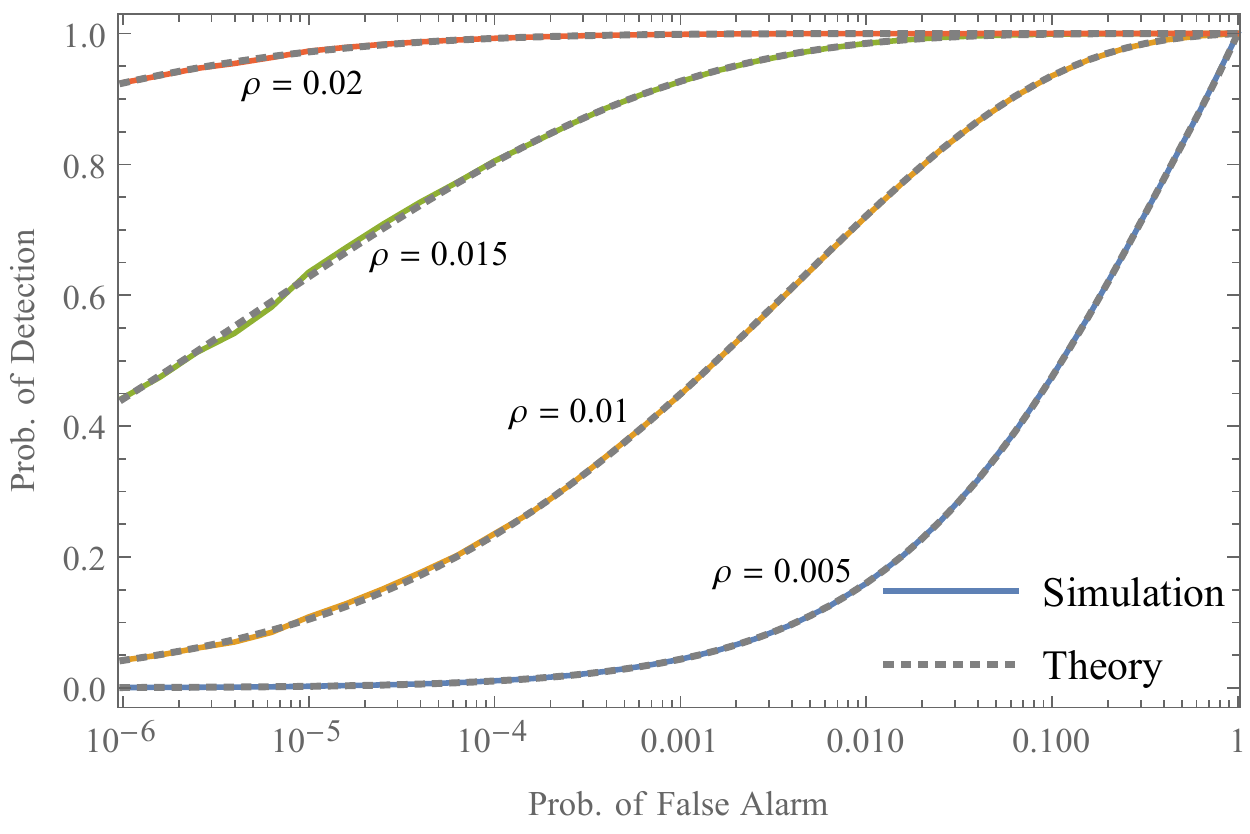}
		\label{subfig:ROC_LR_med_diff}}
	\hfil
	\subfloat[]{\includegraphics[width=\columnwidth]{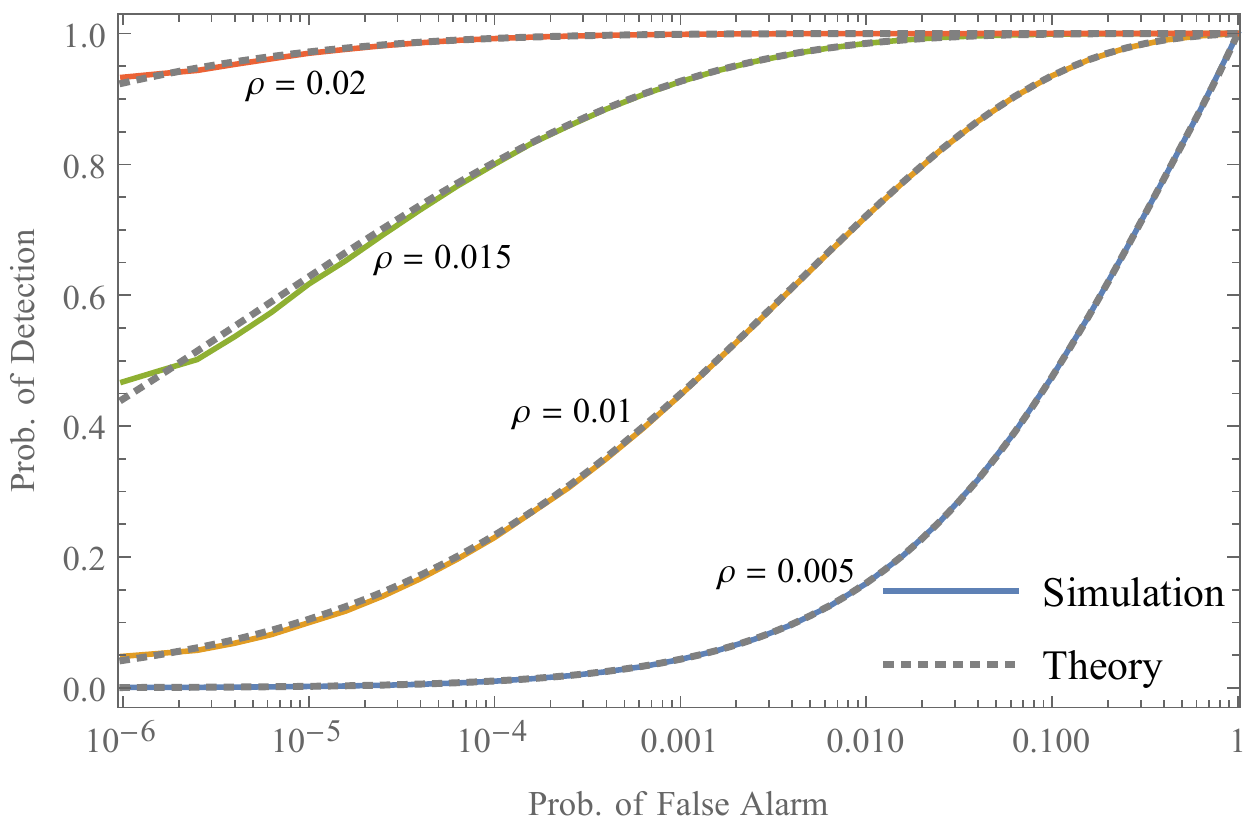}
		\label{subfig:ROC_LR_big_diff}}
	\caption{Comparison of simulated and theoretical ROC curves for the LR detector when $(\sigma_1, \sigma_2)$ equals (a) $(0.1, 10)$ and (b) $(0.01, 10\,000)$. In all cases, $N = 50\,000$.}
	\label{fig:ROC_LR_diff_sigma}
\end{figure}

It is interesting to note that, although the approximate LR detector was derived under the assumption that $\sigma_1 = \sigma_2 = 1$ in \eqref{eq:QTMS_cov}, it is nevertheless still a viable detector function when $\sigma_1$ and $\sigma_2$ are unknown. In Fig.\ \ref{fig:ROC_LR_diff_sigma}, we plot simulated ROC curves for the approximate LR detector when $(\sigma_1, \sigma_2) = (0.1, 10)$ and $(0.01, 10\,000)$. Even in the second case, where there is a large deviation from the assumption $\sigma_1 = \sigma_2 = 1$, the ROC curves do not noticeably deviate from the theoretical expression \eqref{eq:ROC_LR}. From these results, it appears plausible that so long as $\rho \ll 1$ and $\bar{P}_\text{tot} > 2$, the approximate LR detector gives results that are reasonably close to the theoretical expression \eqref{eq:ROC_LR}. (The condition $\bar{P}_\text{tot} > 2$ is necessary because otherwise the denominator of \eqref{eq:det_LR_approx} would become negative.)

\section{Nonoptimality of the LR Detector}

It is well known that, according to the Neyman-Pearson lemma, the LR test is the most powerful one when deciding between two simple hypotheses \cite{casella2002stat}. Referring back to \eqref{eq:hypotheses}, however, we see that the target detection problem does not satisfy the premises of the Neyman-Pearson lemma. The hypothesis that a target is present is not a simple hypothesis. We are not deciding between $\rho = 0$ and $\rho = \rho_0$ for some known value $\rho_0$. It is therefore not permissible to rely on the Neyman-Pearson lemma to state that the likelihood-ratio test is optimal.

There is, however, an extension to the Neyman-Pearson lemma which applies to composite hypotheses of the type seen in \eqref{eq:hypotheses}: the Karlin-Rubin theorem \cite{casella2002stat}. It states that the LR test is the most powerful test for one-sided composite hypotheses such as \eqref{eq:hypotheses} when certain conditions are satisfied. One of these conditions is that there exist a scalar-valued sufficient statistic $T(x)$ for $\rho$. The Fisher-Neyman factorization theorem states that $T(x)$ is a sufficient statistic for $\rho$ if and only if the LR factorizes into the form $h(x) g_\rho[T(x)]$, where $h(x)$ does not depend on $\rho$ while $g_\rho[T(x)]$ depends on the data $x$ only through the statistic $T(x)$. Such a statistic $T$, however, cannot be found. To see this, it is necessary only to inspect the log-likelihood \eqref{eq:log_like}. It is clear that $g_\rho[T(x)]$ satisfies
\begin{equation} \label{eq:g_factor}
	\ln g_\rho[T(x)] = -\frac{N}{2} \left[ \frac{\bar{P}_\text{tot} - \bar{D}_1 \rho}{1 - \rho^2} + 2 \ln (1 - \rho ^2) \right].
\end{equation}
But this expression depends on the data through \emph{two} statistics, namely $\bar{P}$ and $\bar{D}_1$. There is no way to combine $\bar{P}$ and $\bar{D}_1$ so that \eqref{eq:g_factor} is a function of a single scalar-valued statistic for $\rho$. Therefore, the LR does not factor into the form $h(x) g_\rho[T(x)]$ when $T$ is a scalar and there exists no scalar-valued sufficient statistic. It follows that the Karlin-Rubin theorem does not apply, and we cannot rely on this theorem to state that the LR test is optimal.

The above discussion shows that the LR detector is not necessarily the most powerful detector for the target detection problem. It serves as a plausibility argument for the following statement: \emph{the LR detector is not optimal}. The proof is by counterexample: there exists a detector which is in fact more powerful than the LR detector. That detector is none other than $D_1$, which has appeared multiple times in the above calculations.

\begin{figure}[t]
	\centerline{\includegraphics[width=\columnwidth]{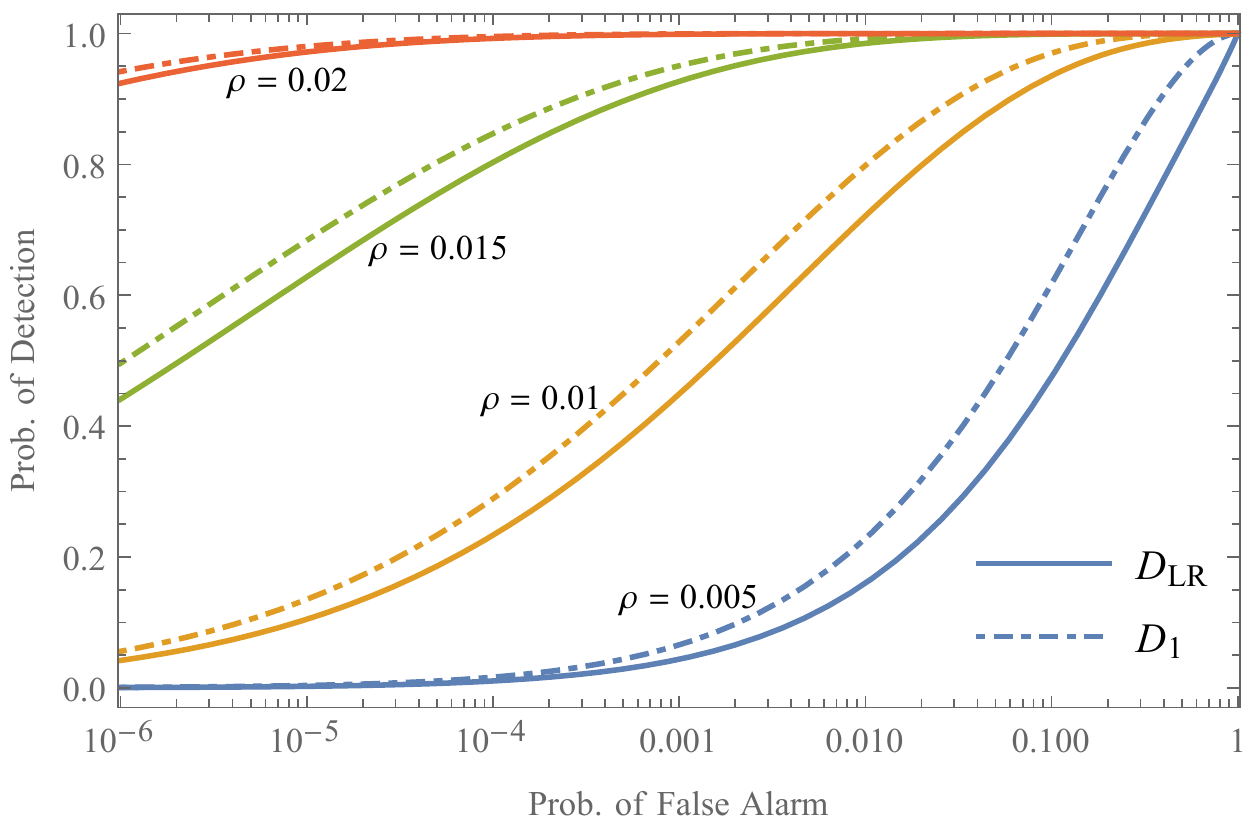}}
	\caption{Comparison of the LR detector with $D_1$ for $N = 50\,000$ and varying values of $\rho$.}
	\label{fig:ROC_LR_D1}
\end{figure}

The use of $D_1$ as a detector function was studied in \cite{luong2019roc}, where the ROC curve in the limit of large $N$ was determined. Thus we may compare that expression with \eqref{eq:ROC_LR}, which is valid in the same limit. Fig.\ \ref{fig:ROC_LR_D1} shows that there is a small but clear advantage in using $D_1$ for target detection compared to using the LR detector. We will show in the following section that $D_1$ is not always better than the LR detector; when $N$ is small, there are regimes where the LR detector is still better than $D_1$. Therefore it is not possible to make blanket statements about the superiority of one detector over another, and the search for good detector functions remains an open problem.

\section{Simulation Results for Small $N$}

The ROC curve expressions in the previous section have all been derived under the assumption that the number of samples $N$ is large. However, in the context of radar detection, it is not always possible to use very large values of $N$. This correspond to the use of long integration times, which is undesirable in situations where radars need to detect targets quickly. Therefore, it is of interest to characterize the detection performance of the LR detector for small $N$. Unfortunately, there are no analytical results that we can rely upon. We therefore turn to simulations.

We note that, when $N$ is small, it is necessary that $\rho$ be made larger in order to compensate. For this reason, the results in this section apply to cases where the target is easier to detect (e.g.\ smaller range or larger radar cross section), and we wish to detect the target very quickly.

\begin{figure}[t]
	\centerline{\includegraphics[width=\columnwidth]{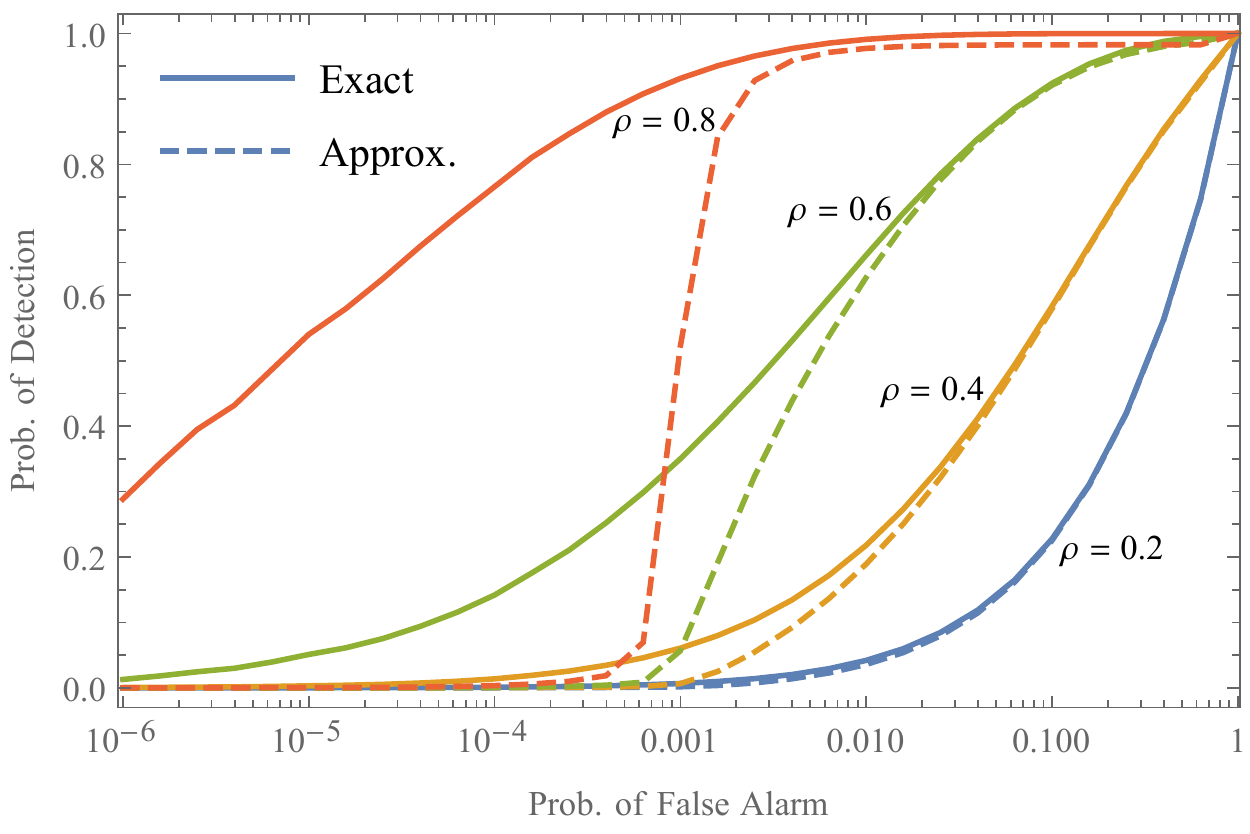}}
	\caption{Comparison of simulated ROC curves for the exact LR detector \eqref{eq:det_LR} with the small-$\rho$ approximation \eqref{eq:det_LR_approx} for $N = 10$ and varying values of $\rho$.}
	\label{fig:ROC_LR_exact_approx_small_N}
\end{figure}

Our first comparison is between the exact LR detector and the approximate detector \eqref{eq:det_LR_approx}. We have shown previously that, even if the assumption that $\sigma_1 = \sigma_2 = 1$ is violated, the approximate detector is still viable. If we could show that it is similarly viable when the assumption $\rho \ll 1$ is violated, we could eschew the use of the exact detector altogether. This would be desirable because the approximate detector requires far less computational requirements. Unfortunately, as shown in Fig. \ref{fig:ROC_LR_exact_approx_small_N}, the approximate LR detector does not perform nearly as well as the exact detector when $\rho$ is high.

\begin{figure}[t]
	\centerline{\includegraphics[width=\columnwidth]{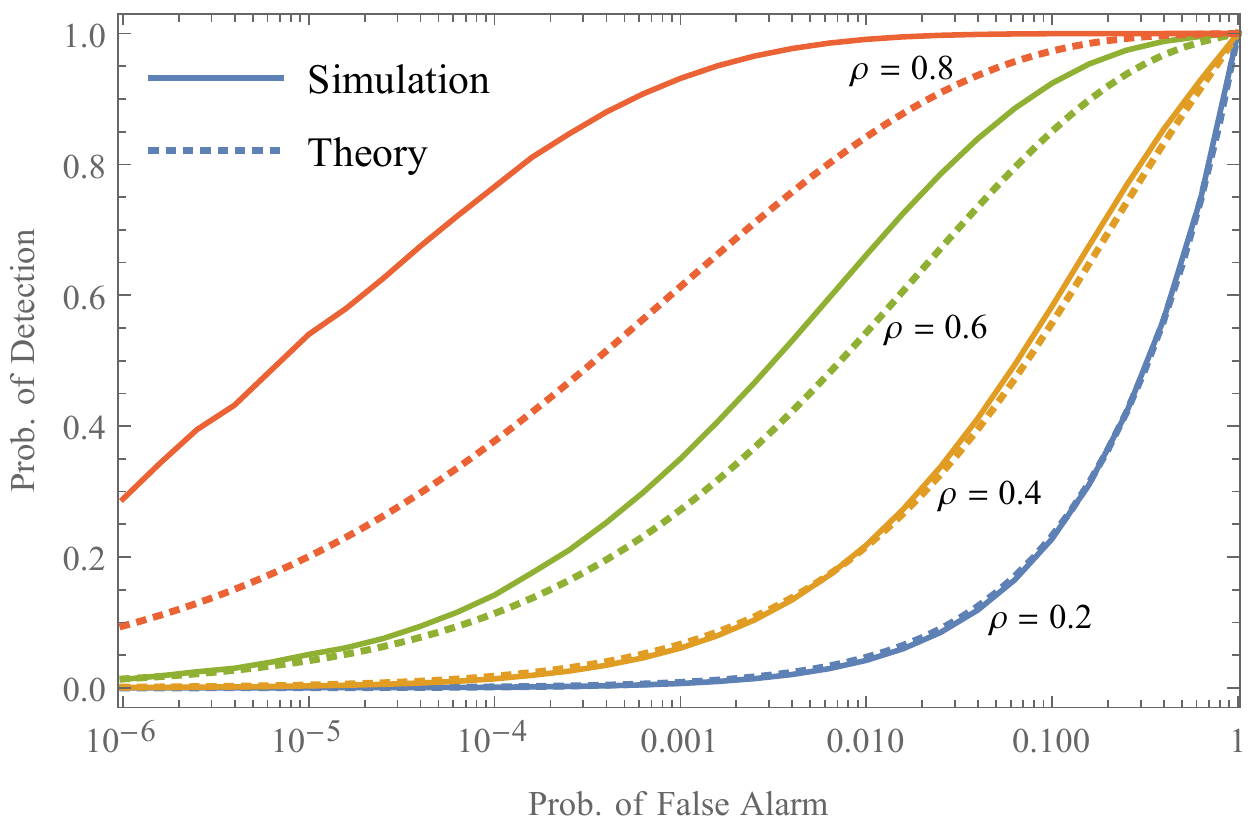}}
	\caption{Comparison of simulated ROC curves for the LR detector with theoretical ROC curves calculated from \eqref{eq:ROC_LR} for $N = 10$ and varying values of $\rho$.}
	\label{fig:ROC_LR_sim_theory_small_N}
\end{figure}

Our next comparison is between simulated ROC curves for the exact LR detector and the corresponding theoretical ROC curves calculated using \eqref{eq:ROC_LR}. The motivation is to see how far we may rely on the theoretical expression even when $\rho$ is large and $N$ is small. Fig.\ \ref{fig:ROC_LR_sim_theory_small_N} shows that, in fact, it does not. Remarkably, when $\rho$ is very large, the simulated ROC curves are actually better than the corresponding theoretical ROC curves. 

\begin{figure}[t]
	\centerline{\includegraphics[width=\columnwidth]{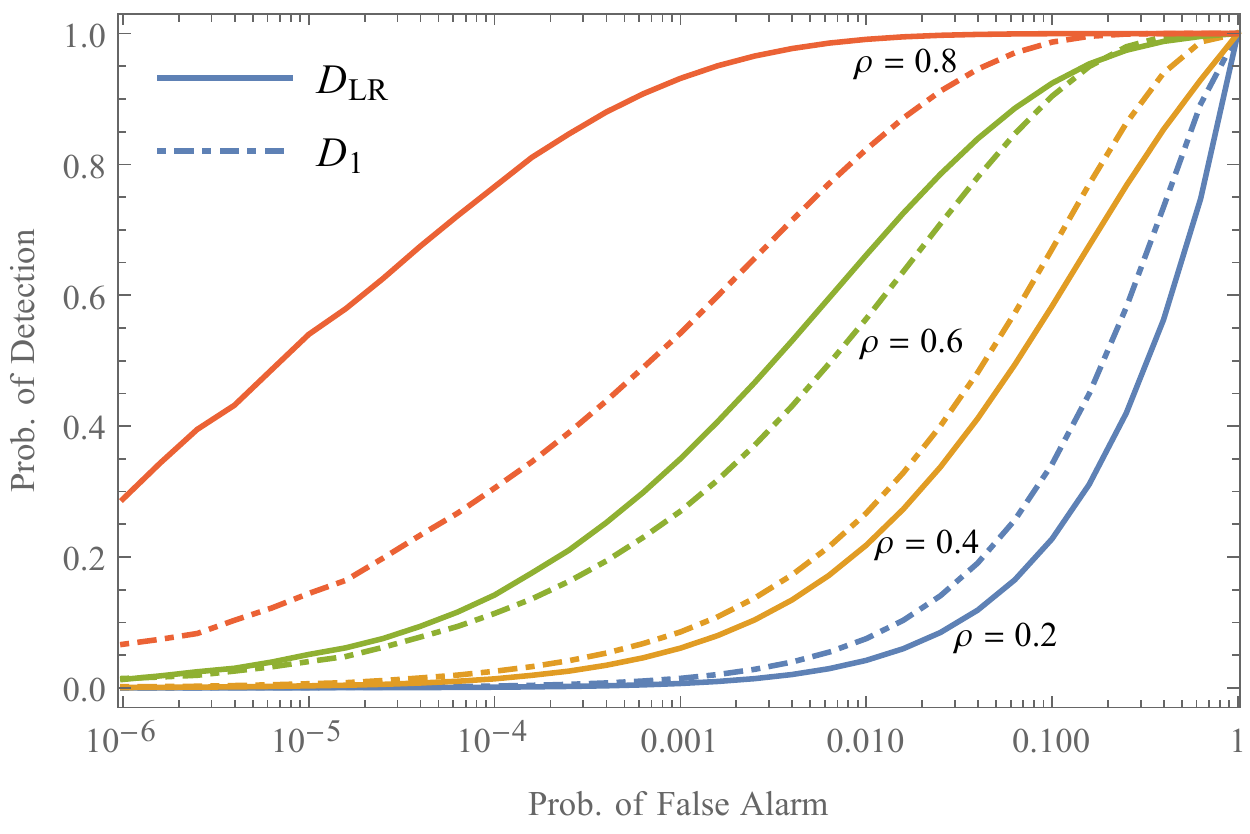}}
	\caption{Simulated ROC curves for the LR detector compared with the $D_1$ detector for $N = 10$ and varying values of $\rho$.}
	\label{fig:ROC_LR_D1_small_N}
\end{figure}

Finally, we compare the performance of the (exact) LR detector with the $D_1$ detector function. As we already saw in Fig.\ \ref{fig:ROC_LR_D1}, the theoretical ROC curve expressions show that $D_1$ is a better detector function when $\rho$ is small. However, the simulation results in Fig. \ref{fig:ROC_LR_D1_small_N} suggest that there is a crossover in detection performance, and when $\rho$ is high, $D_\text{LR}$ is better than $D_1$. Therefore, we are not justified in making blanket statements about the optimality of any detector function. Instead, it may be a better strategy to choose $D_\text{LR}$ when $\rho$ is expected to be large, and $D_1$ when $\rho$ is expected to be small.

\section{Conclusion}

In this paper, we derived and analyzed a detector function for QTMS radar and noise radar that is based on the likelihood ratio. In addition to an exact formula for the LR detector, we derived an approximate detector function which holds in the limit of small correlation coefficients $\rho$. We derived a mathematical formula for the ROC curve of the LR detector which holds when the number of samples $N$ is large; we also performed simulations to understand the behavior of the detector when $N$ is small. We found that the LR detector is not optimal, which runs counter to the intuition that the likelihood ratio test is always optimal; it is outperformed by the $D_1$ detector function when $\rho$ is small. However, the LR detector is a strong competitor, significantly outperforming $D_1$ when $\rho$ is very large and $N$ is small.

One drawback of our work is that it assumes that the values of the nuisance parameters $\sigma_1$, $\sigma_2$, and $\phi$ are known. Although there exist situations where this is a reasonable assumption, it would be more satisfying if we could derive the LR detector function without the use of these assumptions. This will be the subject of future work.

\section*{Acknowledgment}

D.\ Luong acknowledges Ian Lam Wheng-Kit for valuable discussions and for bringing \cite{davidson1970limiting} to our attention.

\bibliographystyle{ieeetran}
\bibliography{qradar_refs,own_refs}

\end{document}